\documentclass[sigconf,nonacm,acmthm=false]{acmart}
\usepackage{amsmath, amsthm}
\usepackage{graphicx}
\usepackage{bbm}
\usepackage{booktabs}

\newcommand{\hookItems}{100}
\newcommand{\hookPLight}{0.01}
\newcommand{\hookPHeavy}{0.20}
\newcommand{\hookLiftAnalytic}{1.82}
\newcommand{\fmStatBlock}{14}
\newcommand{\mltwfDevShare}{0.73}
\newcommand{\mltwfProfAct}{0.067}
\newcommand{\mltwfGroups}{40}
\newcommand{\mltwfLastGroupMaxUsers}{102}
\newcommand{\lfmDevShare}{0.00}
\newcommand{\lfmVarRatio}{0.73}
\newcommand{\lfmNonposPearson}{257}
\newcommand{\lfmBlockPairs}{595}
\newcommand{\amzDevShare}{0.50}
\newcommand{\amzProfAct}{0.050}
\newcommand{\amzNonposPearson}{8}
\newcommand{\amzBlockPairs}{703}
\newcommand{\fmItems}{1{,}763}
\newcommand{\fmMinUsers}{50}
\newcommand{\fmPairs}{1.55\,M}
\newcommand{\fmDenseShare}{95\%}
\newcommand{\fmNeverPct}{1.2\%}
\newcommand{\fmIndWithinTwo}{35\%}
\newcommand{\fmActWithinTwo}{81\%}
\newcommand{\fmActSpearman}{0.31}
\newcommand{\fmRareBucketUsers}{1--2}
\newcommand{\fmRareBucketItems}{252}
\newcommand{\fmRareBucketShare}{0.1\%}
\newcommand{\fmRareWithinAct}{9.2}
\newcommand{\fmTailItems}{1{,}770}
\newcommand{\fmRareEffUsers}{3.3}
\newcommand{\fmRareContribUsers}{43}
\newcommand{\catalogGroups}{40}
\newcommand{\trendRtwoMin}{0.88}
\newcommand{\trendRtwoMax}{1.00}

\newtheorem{assumption}{Assumption}
\newtheorem{proposition}{Proposition}
\newtheorem{corollary}{Corollary}
\theoremstyle{definition}

\newtheorem{example}{Example}
\theoremstyle{remark}

\newcommand{\PP}{\mathbb{P}}
\newcommand{\EE}{\mathbb{E}}
\newcommand{\RR}{\mathbb{R}}
\newcommand{\Var}{\operatorname{Var}}
\newcommand{\Cov}{\operatorname{Cov}}
\newcommand{\lift}{\operatorname{lift}}
\newcommand{\logit}{\operatorname{logit}}
\newcommand{\logistic}{\operatorname{logistic}}
\newcommand{\x}{\mathbf{x}}
\newcommand{\X}{\mathbf{X}}
\newcommand{\indep}{\rotatebox[origin=c]{90}{$\models$}}
\newcommand{\chara}{\mathbbm 1}

\title{Do Similarity Measures Measure Similarity? How User Activity Creates Phantom Correlation in Recommender Systems}
\author{Marco Favier}
\affiliation{%
  \institution{University of Antwerp}
  \country{Belgium}}
\author{Noah Dani\"els}
\affiliation{%
  \institution{University of Antwerp}
  \country{Belgium}}
\author{Jens Leysen}
\affiliation{%
  \institution{University of Antwerp}
  \country{Belgium}}
\author{Sander Marinus}
\affiliation{%
  \institution{University of Antwerp}
  \country{Belgium}}
\author{Bart Goethals}
\affiliation{%
  \institution{University of Antwerp}
  \country{Belgium}}
\begin{document}

\begin{abstract}
Item-item similarity is one of the oldest and most foundational signals in recommender systems: two items are assumed to be similar when the same users consume them. Despite the sparsity and scale of interaction data, the implicit assumption across the field is that item co-occurrence reflects genuine shared preference—capturing what users like together.
In this paper, we demonstrate that a substantial fraction of observed item-item similarity is an artifact of variance in user activity rather than shared taste. Even if every user consumed items completely independently—differing only in their total consumption volume—standard similarity measures still detect strong associations between items. We term this phenomenon \emph{phantom correlation}: spurious item association that emerges strictly from heterogeneous user activity in the complete absence of preference structure.
Starting from minimal formal assumptions, we show that independent consumption under varying activity is described by a Rasch model, and we derive how the resulting association depends on item popularity for each similarity measure. By comparing our theoretical baseline against real-world benchmark datasets, we reveal the extent to which popular similarity measures reflect phantom correlation, which bears on how item co-occurrence is interpreted and used in collaborative filtering.
\end{abstract}

\keywords{recommender systems, item-item similarity, user activity, popularity bias, Rasch model, implicit feedback}

\maketitle

\section{Introduction}\label{sec:intro}

Item-item similarity is one of the oldest signals in recommender systems: two items are considered similar when the same users consume them~\cite{sarwar2001,linden2003}. It scales to large catalogs, and it stays reliable on sparse data, where interactions among more than two items are rarely observed~\cite{sarwar2001}. For these reasons, item-item signals remain in use in many of the most successful recommender systems, and they are often used as a baseline to compare more complex models~\cite{linden2003,ferraridacrema2019}. Item-item similarity drives recommendations on Web platforms such as streaming services, online shops and social media, where it is computed from logs of implicit feedback whose users differ widely in activity: understanding what these similarities measure is therefore a question about the Web data that recommender systems are built on. 

In practice, a user's history is multiplied by an item-item similarity matrix to score and rank the items. The simplest example is item-kNN~\cite{sarwar2001}, which scores an item by adding up its similarities with the items in the user's history; SLIM~\cite{ning2011} learns a sparse similarity matrix by regression, and EASE~\cite{steck2019} learns a dense one in closed form. The idea behind item-item similarity is pretty intuitive: if two items are consumed together more often than expected by chance, it is because users like them together, and a similarity measure quantifies this information~\cite{agrawal1993,brin1997,sarwar2001}.

Another important aspect of the typical data of recommender systems is that the distribution of user activity is extremely heterogeneous: some users consume a handful of items, while others consume hundreds or thousands. The distribution of the history length of users is often described as heavy-tailed, and often modeled as a power law~\cite{clauset2009}.
We show that a large part of the observed similarity can arise from these differences in activity alone, not from users liking items together. We call this phenomenon \emph{phantom correlation}: like Simpson's paradox~\cite{simpson1951} and confounding in recommendation~\cite{xu2025}, it is association created by a hidden common factor, here activity. To better illustrate this, consider the following example:

\begin{example}\label{ex:hook}
A small streaming service comprises \hookItems{} films. Users on the platform like the movies equally, so no movie is preferred over another. But the activity differs between users: half of them are \emph{casual viewers}, who watch any given film with probability $\hookPLight$. The other half are \emph{binge watchers}, who watch any given film with probability $\hookPHeavy$. Every user picks films completely independently: watching one film says nothing about which other film they will watch.

Trying to recommend films to users is futile: every film is equally good, and no two films are related to each other. Yet the recommender sees a different picture. If one computes the similarities between items, every pair turns out to be positively correlated. More precisely, movies appear together about $\hookLiftAnalytic$ times more often than chance would predict.
\end{example}

This might seem counter-intuitive at first. There is basically no information about taste in the data, so how can it be that every pair of films is positively correlated? The main reason is that the similarity measure (which in this case is lift) is not predicting the taste of the user, but rather how much they watch. Knowing that you have seen any film makes you more likely to be a binge watcher, and so more likely to see any other film.

The phantom correlation phenomenon arises because three conditions coexist: 
\begin{enumerate}
    \item Each user consumes items independently: nobody has a taste for pairs.
    \item A more active user is more likely to consume \emph{every} item.
    \item The only difference between users is how active they are.
\end{enumerate}
With identical items this is easy to model; with items of different popularity, we first have to formalize what it means to be ``the same as a casual viewer, only more active''.

We formalize the three conditions, show that they lead to a unique model, derive its consequences for similarity measures, and measure how much of the association in real datasets it explains. The model we obtain is known in psychometrics as the Rasch model~\cite{rasch1960,andersen1977,cressie1983}, and heterogeneous activity is known to bias covariance and correlation in bipartite networks~\cite{puccio2019}. Our contributions are a derivation of the model from three assumptions on consumption, the exact dependence of the induced association on item popularity for lift, cosine and Pearson correlation, and a measurement of how much of the observed association activity alone explains, also for a learned model (EASE).

\section{Related Work}\label{sec:related}
\paragraph{Item-item similarity.} Item-based collaborative filtering scores items by their similarity to the items a user consumed, measured by cosine, conditional probability or lift~\cite{sarwar2001,deshpande2004}, or learned by regression, as a sparse matrix in SLIM~\cite{ning2011} or a dense one in EASE~\cite{steck2019}. Heterogeneity has long been handled by adjusting the measures: inverse user frequency down-weights items that many users liked~\cite{breese1998}, and PMI is known to favor rare items~\cite{levy2015}. We instead derive how much association differences in activity create, and how it depends on popularity for each measure.

\paragraph{Popularity bias.} Recommenders tend to over-recommend popular items~\cite{jannach2015,klimashevskaia2024}, and this bias is usually studied in what a recommender outputs. We show that part of the popularity structure of item-item similarity is already in the data, before any learning, as a consequence of differences in activity.

\paragraph{Latent variable models.} The activity model is the Rasch model of psychometrics~\cite{rasch1960,andersen1977}, here with a nonparametric distribution of the latent trait~\cite{lindsay1991}; Cressie and Holland~\cite{cressie1983} characterize the data it can produce, and positive association holds in every unidimensional model with monotone item curves~\cite{holland1986}. Latent factor models of collaborative filtering~\cite{hofmann2004,koren2009} use several dimensions to encode taste; we use a single one on purpose, to isolate what activity alone does.

\paragraph{Bipartite networks.} Saracco et al.~\cite{saracco2017} project user-item networks onto items by keeping the pairs whose number of common users is unlikely under the bipartite configuration model, the maximum-entropy model with fixed expected degrees~\cite{park2004}. Its distribution is that of the activity model with one fixed parameter per user, so their expected co-occurrence already contains the association we study. We describe activity instead by a distribution over the population, estimated from all history lengths, and make the association it creates the object of study rather than a baseline to subtract. Degree-normalized projections~\cite{zhou2007} and swap randomization~\cite{gionis2007} also account for degrees, without describing the association that degrees induce. Closest to our work, Puccio et al.~\cite{puccio2019} show that heterogeneous degrees bias the sample covariance and correlation in bipartite networks, positively and depending on the degrees of the two nodes, and propose corrected estimators; their analysis uses two classes of nodes and small heterogeneity, and does not cover lift, cosine or learned models.

\section{Setting and Notation}\label{sec:setting}
We consider $U$ users and $n$ items, in the setting of implicit feedback: we observe which items each user interacted with (watched, listened to, reviewed or rated highly), not how much they liked them. We write $[n]:=\{1,\dots,n\}$ for the set of items and $[U]:=\{1,\dots,U\}$ for the set of users. User $u\in[U]$ is described by a binary vector $\X_u=(X_{u1},\dots,X_{un})\in\{0,1\}^n$, where $X_{ui}=1$ if user $u$ consumed item $i$ and $X_{ui}=0$ otherwise; a $0$ mixes items the user did not want with items they never came across~\cite{hu2008,liang2016}. The popularity of item $i$ is $p_i=\PP(X_i=1)$, and $p_{ij}=\PP(X_i=1,X_j=1)$ is the probability that a user consumed both $i$ and $j$. When we drop the user index, the vector $\X$ denotes a generic user.

Item-item similarity measures are functions of $p_i$, $p_j$ and $p_{ij}$. We consider the following four widely used measures~\cite{sarwar2001,deshpande2004}:
\begin{itemize}
\item \textbf{Covariance}: $\Cov(X_i,X_j):=p_{ij}-p_ip_j$.
\item \textbf{Lift}: $\lift_{ij}:=\dfrac{p_{ij}}{p_ip_j}$; its logarithm is the pointwise mutual information~\cite{church1989}.
\item \textbf{Cosine similarity}: $\cos_{ij}:=\dfrac{p_{ij}}{\sqrt{p_ip_j}}$.
\item \textbf{Pearson correlation}: $\rho_{ij}:=\dfrac{p_{ij}-p_ip_j}{\sqrt{p_i(1-p_i)\,p_j(1-p_j)}}$.
\end{itemize}
Lift equals $1$, and covariance and the Pearson correlation equal $0$, exactly when the two items co-occur as often as chance would predict. Notice that, once the popularities are fixed, all four are functions of $p_{ij}$ alone: a model that reproduces the popularities and the co-occurrences reproduces every measure.
Moreover, we will consider the similarity matrix generated by EASE~\cite{steck2019}, which is a dense item-item similarity matrix learned in closed form by linear regression. The EASE similarity matrix is not a function of $p_i$, $p_j$ and $p_{ij}$ alone, but depends on the co-occurrences and popularities of all items.

\section{The Activity-Driven Model}\label{sec:model}
In order to understand how much user activity alone can create phantom correlation and influence similarity measures, we need a model where the only mutual information between items is derived from user activity. To do so, we formalize the three conditions identified in the introduction:
\begin{enumerate}
  \item Each user consumes items independently.
  \item A more active user is more likely to consume every item.
  \item The only difference between users is how active they are.
\end{enumerate}
We assume that each user has an unobserved trait $\theta_u\in\RR$, which represents how active the user is: the larger $\theta_u$, the more items the user is expected to consume.
This unobserved trait is sufficient to formally rephrase the three conditions above:
\begin{assumption}[Independent Consumption]\label{as:independence}
Given $\theta_u$, the user consumes items independently:
\begin{equation}\label{eq:bernoulli}
\PP(\X_u=\x\mid\theta_u)=\prod_{i\in[n]} \PP(X_{ui}=x_i\mid\theta_u),
\end{equation}
with $0<\PP(X_{ui}=1\mid\theta_u)<1$ for every item $i$.
\end{assumption}

In words, the user has no taste for combinations of items or, more precisely, the trait $\theta_u$ fully characterizes the user, since given $\theta_u$ the items behave independently.

On its own, this assumption is weak: a rich enough $\theta_u$ can describe almost any behavior~\cite{hofmann2004,koren2009}. The other assumptions limit what $\theta_u$ can represent.

\begin{assumption}[Activity Influences Every Item]\label{as:monotonicity}
For every item $i\in[n]$, the probability $\PP(X_{ui}=1\mid\theta_u)$ is a measurable function of $\theta_u$ that is non-decreasing almost surely. Moreover, if $p_i\leq p_j$, then $\PP(X_{ui}=1\mid\theta_u)\leq \PP(X_{uj}=1\mid\theta_u)$ almost surely.
\end{assumption}

It makes $\theta_u$ a measure of activity, with no preference for specific items, and it preserves the order of popularity at every activity level. Many models satisfy the first two assumptions, and all of them make every pair of items non-negatively correlated (Proposition~\ref{prop:positive}). The third assumption makes the model unique.

\begin{assumption}[Activity Is the Only Difference]\label{as:volume}
Once we know how many items a user consumed, $S_u:=\sum_{i\in[n]}X_{ui}$, the parameter $\theta_u$ does not carry further information. Formally,
\[
  \X_u \indep \theta_u \mid S_u.
\]
Equivalently, the count $S_u$ is a sufficient statistic for the parameter $\theta_u$.
\end{assumption}
Once we know how many items a user consumed, their activity tells us nothing more about which ones: $\PP(\X_u=\x\mid S_u)$ is the same for every user.

\subsection{Deriving the Model}
The three assumptions completely determine the shape of the model; the elementary proof shows why ``more active'' must mean ``multiplied odds''.
\begin{proposition}[The activity model]\label{prop:model}
A distribution $\PP(\X=\x)$ satisfies Assumptions~\ref{as:independence}--\ref{as:volume} if and only if there exist item parameters $h_1,\dots,h_n\in\mathbb{R}$ and a distribution $P$ of activity levels $\theta\in\mathbb{R}$ such that
\begin{equation}\label{eq:maxent}
\begin{aligned}
\PP(\X=\x)&=\int\frac{1}{Z(\theta)}\exp\Big(\sum_{i\in[n]}(h_i+\theta)\,x_i\Big)\,dP(\theta),\\
Z(\theta)&=\prod_{i\in[n]}\big(1+e^{h_i+\theta}\big).
\end{aligned}
\end{equation}
In particular, given the activity level $\theta$, it holds that $\logit\PP(X_i=1\mid\theta)=h_i+\theta$ for every item $i$.
\end{proposition}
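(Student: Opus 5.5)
Both directions hinge on one reduction: under Assumptions~\ref{as:independence} and~\ref{as:volume}, the log-odds of each item must separate additively into an item term and an activity term.

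\textbf{The ``if'' direction.} Take \eqref{eq:maxent} and set $\pi_i(\theta)=\logistic(h_i+\theta)$.
\begin{itemize}
\item Assumption~\ref{as:independence} holds because the integrand factorizes over items as $\prod_i \pi_i(\theta)^{x_i}(1-\pi_i(\theta))^{1-x_i}$, with each factor strictly between $0$ and $1$.
\item Assumption~\ref{as:monotonicity} holds because $\pi_i$ is increasing in $\theta$. Moreover $p_i=\int\pi_i\,dP$ is strictly increasing in $h_i$, so $p_i\le p_j$ forces $h_i\le h_j$, and hence $\pi_i(\theta)\le\pi_j(\theta)$ for every $\theta$.
\item Assumption~\ref{as:volume} holds because $\PP(\x\mid\theta)=e^{\theta s}e^{\sum_i h_ix_i}/Z(\theta)$ with $s=\sum_i x_i$. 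Therefore
\[
\PP(\x\mid S=s,\theta)=\frac{e^{\sum_i h_ix_i}}{\sum_{|\mathbf y|=s}e^{\sum_i h_iy_i}},
\]
which is free of $\theta$.
\end{itemize}

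\textbf{The ``only if'' direction.} Write $\pi_i(\theta)=\PP(X_i=1\mid\theta)\in(0,1)$ and $\lambda_i(\theta)=\logit\pi_i(\theta)$. By Assumption~\ref{as:independence},
\[
\PP(\x\mid\theta)=\prod_i(1-\pi_i(\theta))\,\exp\Big(\sum_i\lambda_i(\theta)x_i\Big).
\]
Assumption~\ref{as:volume} says $\PP(\x\mid S=s,\theta)$ does not depend on $\theta$ (for $P$-a.e.\ $\theta$). Compare two vectors with the same count $s=1$, namely $\x=e_i$ and $\x=e_j$. Their conditional ratio is $\exp(\lambda_i(\theta)-\lambda_j(\theta))$, so $\lambda_i(\theta)-\lambda_j(\theta)=c_{ij}$ is a constant. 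Fix item $1$ as reference and set $h_i=c_{i1}$ and $\tilde\theta=\lambda_1(\theta)$. Then $\logit\pi_i=h_i+\tilde\theta$ for all $i$. Substituting into the factorized form gives exactly the integrand of \eqref{eq:maxent}, with $Z$ as stated, because
\[
1-\pi_i=\frac{1}{1+e^{h_i+\tilde\theta}}.
\]
Finally, let $P$ be the law of $\tilde\theta=\lambda_1(\theta_u)$, the pushforward of the activity distribution. This reparametrization is legitimate because the distribution of $\X$ depends on $\theta$ only through $\tilde\theta$, and $\lambda_1$ is measurable by Assumption~\ref{as:monotonicity}. If $n=1$ the sufficiency condition is vacuous, and one simply takes $h_1=0$.

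\textbf{Main obstacle.} The delicate step is the measure-theoretic handling of ``independence given $S$''. The conditional independence only holds almost surely, and conditioning on $S=1$ is only meaningful when that event has positive probability given $\theta$. It does have positive probability, since every $\pi_i\in(0,1)$. I would therefore argue with versions of conditional probabilities and state the identity $\lambda_i-\lambda_j=c_{ij}$ for $P$-a.e.\ $\theta$. Modifying $\lambda_i$ on a null set does not change $\PP(\X=\x)$, so this is enough.

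Assumption~\ref{as:monotonicity} is not actually needed for the representation itself. It only guarantees that $\tilde\theta$ is monotone in the original activity, so that $\tilde\theta$ still deserves to be called activity.
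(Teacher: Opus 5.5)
Your proof is correct and follows essentially the same route as the paper: independence gives the exponential form, comparing the two size-one baskets $e_i$ and $e_j$ shows the logit differences are free of $\theta$, and anchoring on item $1$ with $P$ pushed forward through $\lambda_1$ yields the representation. The only differences are minor: you work with the conditional law given $S$ directly instead of invoking the Fisher--Neyman factorization. In the converse, you also check the popularity-order clause of Assumption~\ref{as:monotonicity} (via strict monotonicity of $p_i$ in $h_i$) and the null-set issues more explicitly than the paper does.
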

\begin{proof}
($\Rightarrow$) Since $x_i\in\{0,1\}$, we have $\log\PP(X_i=x_i\mid\theta)=x_i\logit\PP(X_i=1\mid\theta)+\log\PP(X_i=0\mid\theta)$, so Assumption~\ref{as:independence} gives
\begin{equation}\label{eq:expform}
\begin{aligned}
&\PP(\X=\x\mid\theta)
=\exp\Big(\sum_{i\in[n]}\log\PP(X_i=x_i\mid\theta)\Big)\\
&=\exp\Big(\sum_{i\in[n]}\big(x_i\logit\PP(X_i=1\mid\theta)+\log\PP(X_i=0\mid\theta)\big)\Big).
\end{aligned}
\end{equation}
By the Fisher--Neyman factorization theorem~\cite{halmos1949}, Assumption~\ref{as:volume} holds if and only if $\PP(\X=\x\mid\theta)=g(S(\x),\theta)\,\varphi(\x)$ for some non-negative functions $g$ and $\varphi$, with $\varphi$ independent of $\theta$, which means
\begin{equation*}
\begin{aligned}
&\sum_{i\in[n]}\big(x_i\logit\PP(X_i=1\mid\theta)+\log\PP(X_i=0\mid\theta)\big)\\
&\qquad=\log g(S(\x),\theta)+\log\varphi(\x).
\end{aligned}
\end{equation*}
Take two baskets $\x$ and $\mathbf{y}$ that coincide except for items $j\neq k$, with $x_j=y_k=1$ and $x_k=y_j=0$ (for instance with all other items set to $0$), and the same $\theta$. They have the same size, so subtracting the two equations gives
\begin{equation*}
\logit\PP(X_j=1\mid\theta)-\logit\PP(X_k=1\mid\theta)=\log\varphi(\x)-\log\varphi(\mathbf{y}).
\end{equation*}
The right-hand side does not depend on $\theta$, while the left-hand side does not depend on any item other than $j$ and $k$. So if we define the function $\alpha(\theta):=\logit\PP(X_1=1\mid\theta)$, then for every item $i$
\begin{equation*}
\logit\PP(X_i=1\mid\theta)=h_i+\alpha(\theta)
\end{equation*}
for some constant $h_i$. The conditional distribution in Equation~\eqref{eq:expform} therefore depends on $\theta$ only through $\alpha(\theta)$, so integrating over $\theta$ or over $\alpha(\theta)$ gives the same distribution of baskets. Without loss of generality we can therefore take $\alpha$ to be the identity, replacing $\theta$ by $\alpha(\theta)$ and $P$ by the distribution of $\alpha(\theta)$ across users. Substituting $\logit\PP(X_i=1\mid\theta)=h_i+\theta$ and $\PP(X_i=0\mid\theta)=\logistic(-(h_i+\theta))$ into Equation~\eqref{eq:expform} gives
\begin{equation*}
\PP(\X=\x\mid\theta)=\frac{1}{Z(\theta)}\exp\Big(\sum_{i\in[n]}(h_i+\theta)\,x_i\Big),
\end{equation*}
and integrating against $P$ gives Equation~\eqref{eq:maxent}.

($\Leftarrow$) If Equation~\eqref{eq:maxent} holds, take the activity level $\theta$, with distribution $P$, as the trait. Given $\theta$ the items are independent, with $\PP(X_i=1\mid\theta)=\logistic(h_i+\theta)$; the trait enters only through $\theta\,S(\x)$, so the count is sufficient; and every $\logistic(h_i+\theta)$ is non-decreasing in $\theta$, with the popularity order given by the order of the $h_i$ at every activity level.
\end{proof}

To understand intuitively what the model says, we need to see how the odds of consuming an item change with the activity level. For two activity levels $\theta$ and $\theta'$,
\[
  \frac{\PP(X_i=1\mid\theta')}{\PP(X_i=0\mid\theta')}
  = e^{\theta'-\theta}\,\frac{\PP(X_i=1\mid\theta)}{\PP(X_i=0\mid\theta)}\quad\text{for every item }i:
\]
a more active user has the odds of \emph{every} item multiplied by the same factor. For a niche item, odds and probabilities are almost the same thing, so its probability is multiplied by about the same factor; for a blockbuster that is already likely, the probability barely changes. This property, together with the fact that items are independent given the activity level, fully determines the model.

The activity model in Equation~\eqref{eq:maxent} is known in psychometrics as the Rasch model~\cite{rasch1960}, where $S_u$ is the score of a test taker and is known to be a sufficient statistic; Andersen~\cite{andersen1977} proved a converse: if a minimal sufficient statistic for the person parameter exists and does not depend on the item parameters, it must be the raw score, and the model must be of Rasch type. Cressie and Holland~\cite{cressie1983} characterize the distributions of answer patterns that a Rasch model can produce, in a form equivalent to Equation~\eqref{eq:conditional} below. Our derivation is similar in spirit, but its setting differs: Andersen assumes that all items follow a single parametric family, while we let every item have its own consumption curve and assume directly that the number of consumed items is sufficient. In network science, the same conditional distribution with one parameter per user is the bipartite configuration model~\cite{saracco2017} (Section~\ref{sec:related}).

\section{How Activity Creates Correlation}\label{sec:correlation}
\subsection{Model-Independent Consequences}
\subsubsection{Total covariance depends on user activity}
While the model we have previously derived will be useful to quantify the amount of phantom correlation in real datasets, the fact that differences in activity create correlation is a more general phenomenon that holds for any distribution on $\X$. It is a consequence of the following identity:
\begin{equation}\label{eq:identity}
\sum_{i\neq j}\Cov(X_i,X_j)=\Var(S)-\sum_{i\in[n]} p_i(1-p_i),
\end{equation}
which follows from the elementary identity $\Var(X+Y)=\Var(X)+\Var(Y)+2\Cov(X,Y)$, applied to $S=\sum_{i\in[n]}X_i$. In particular, if we fix the item popularities $p_i$, the total covariance is determined by the variance of the number of items consumed by users. In other words, the more users differ in how many items they consume, the larger the total covariance between items, whatever the model.

The identity fixes only the total, not the covariance of individual pairs or other measures.
\subsubsection{Every pair is positively associated}
One might model activity differently, for instance letting a niche item reveal more about activity than a popular one~\cite{breese1998}. Assumption~\ref{as:volume} is not needed to show that activity creates correlation:
\begin{proposition}\label{prop:positive}
Under Assumptions~\ref{as:independence} and~\ref{as:monotonicity}, for every pair of items $i,j$,
\begin{equation}\label{eq:totalcov}
\Cov(X_i,X_j)= \Cov_\theta\big(\PP(X_i=1\mid\theta),\,\PP(X_j=1\mid\theta)\big)\geq 0.
\end{equation}
Moreover, the inequality is strict unless $\theta \indep X_i$ or $\theta \indep X_j$.
\end{proposition}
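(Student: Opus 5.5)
The identity follows from the law of total covariance, and the sign from Chebyshev's association inequality for comonotone functions. Write $f_i(\theta):=\PP(X_i=1\mid\theta)$. Then
\[
\Cov(X_i,X_j)=\EE_\theta\big[\Cov(X_i,X_j\mid\theta)\big]+\Cov_\theta\big(\EE[X_i\mid\theta],\EE[X_j\mid\theta]\big).
\]
By Assumption~\ref{as:independence}, $X_i$ and $X_j$ are conditionally independent given $\theta$. The first term therefore vanishes, and since $\EE[X_i\mid\theta]=f_i(\theta)$ we obtain Equation~\eqref{eq:totalcov}. This step needs only that the $f_i$ are measurable and bounded, which Assumptions~\ref{as:independence}--\ref{as:monotonicity} guarantee.

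For non-negativity, Assumption~\ref{as:monotonicity} makes $f_i$ and $f_j$ non-decreasing (after modification on a $P$-null set, which does not change the covariance). Take $\theta'$ an independent copy of $\theta$. Then
\[
2\Cov_\theta\big(f_i(\theta),f_j(\theta)\big)=\EE\big[(f_i(\theta)-f_i(\theta'))(f_j(\theta)-f_j(\theta'))\big].
\]
Because both functions are non-decreasing, the two factors always have the same sign. The integrand is therefore pointwise non-negative, so the covariance is $\geq 0$.

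For strictness, the expectation is zero exactly when the integrand vanishes $P\otimes P$-almost surely. That is, for almost every pair $(\theta,\theta')$, either $f_i(\theta)=f_i(\theta')$ or $f_j(\theta)=f_j(\theta')$. The step I expect to be the main obstacle is upgrading this to "one of $f_i$, $f_j$ is $P$-a.s.\ constant", which is the statement $\theta\indep X_i$ or $\theta\indep X_j$.

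I would argue by contradiction using monotonicity. Suppose neither function is a.s.\ constant. Then there is a threshold $t$ with $P(f_i(\theta)<t)>0$ and $P(f_i(\theta)\ge t)>0$. Because $f_i$ is monotone, these sets are, up to null sets, a lower and an upper interval $A^-_i$ and $A^+_i$ of $\theta$-values. Define $A^\pm_j$ similarly for $f_j$.

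Consider $\theta$ in the intersection of the two lower sets and $\theta'$ in the intersection of the two upper sets. On that region both increments are strictly non-zero, so the integrand is strictly positive. If both intersections have positive mass, the region has positive $P\otimes P$-mass and we reach a contradiction.

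The delicate case is when the lower set of $f_i$ and the upper set of $f_j$ are essentially disjoint. I would handle it by choosing the thresholds at medians, or simply by working with the ordering of $\theta$ directly. Pick $a<b$ with $f_i(a)<f_i(b)$ and $c<d$ with $f_j(c)<f_j(d)$, each on sets of positive mass. Set $m=\min(a,c)$ and $M=\max(b,d)$. Monotonicity gives strict increments for both functions between neighbourhoods (in mass) of the low end and the high end, so the integrand is positive on a set of positive product measure.

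Finally, I would note the converse. If $f_i$ is a.s.\ constant, then $X_i$ has the same conditional law for almost every $\theta$, which means $\theta\indep X_i$, and the covariance is $0$. This gives the "unless" clause.
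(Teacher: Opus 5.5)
Your proof is correct and follows the paper's route: the law of total covariance, the independent-copy identity $2\Cov(f_i(\theta),f_j(\theta))=\EE[(f_i(\theta)-f_i(\theta'))(f_j(\theta)-f_j(\theta'))]$, and lower/upper threshold sets for strictness. The ``delicate case'' you flag never arises, because $\{f_i<t_i\}$ and $\{f_j<t_j\}$ are both lower intervals of $\theta$, so one contains the other and their intersection (the smaller one) has positive mass, and likewise for the upper sets; this is exactly the paper's argument, and your $\min/\max$ construction is the same observation, made rigorous by requiring $P(\theta\le a)>0$ and $P(\theta\ge b)>0$ (similarly for $c,d$) instead of appealing to ``neighbourhoods in mass''.
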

\noindent This is a special case of a known result for latent variable models with monotone item curves~\cite{holland1986}; the proof is in Appendix~\ref{app:proofs}.

Under the activity model, $\PP(X_i=1\mid\theta)=\logistic(h_i+\theta)$ is strictly increasing, so every pair is strictly positively correlated unless activity is constant: \emph{phantom correlation can only push correlation up}, for every single pair.

\subsubsection{Which activity patterns are possible} The two previous results can be combined to give a necessary condition for a distribution $\PP(\X)$ to be compatible with Assumptions~\ref{as:independence} and~\ref{as:monotonicity}, as expressed by the following inequality:

\begin{corollary}\label{cor:possible}
Under Assumptions~\ref{as:independence} and~\ref{as:monotonicity}, the following inequality holds:
\begin{equation}\label{eq:overdispersion}
\Var(S)\;\geq\;\sum_{i\in[n]} p_i(1-p_i).
\end{equation}
\end{corollary}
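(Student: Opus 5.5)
The plan is to combine the variance identity \eqref{eq:identity} with Proposition~\ref{prop:positive}, exactly as the sentence before the corollary suggests. First I would rewrite the identity in the form
\[
\Var(S)-\sum_{i\in[n]}p_i(1-p_i)=\sum_{i\neq j}\Cov(X_i,X_j).
\]
It holds for any distribution of $\X$, with no assumptions: expand $\Var(\sum_i X_i)$ into the diagonal terms $\Var(X_i)=p_i(1-p_i)$, which use $X_i^2=X_i$, and the off-diagonal covariances. The corollary is therefore equivalent to the statement that the sum of all off-diagonal covariances is non-negative.

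Next I would invoke Proposition~\ref{prop:positive}. Under Assumptions~\ref{as:independence} and~\ref{as:monotonicity}, every ordered pair $i\neq j$ satisfies $\Cov(X_i,X_j)\geq 0$. Summing these non-negative terms over all ordered pairs gives a non-negative right-hand side, hence $\Var(S)\geq\sum_i p_i(1-p_i)$. Assumption~\ref{as:volume} is not used, so the inequality holds in this wider class of models.

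There is essentially no obstacle, since the work is already done in Proposition~\ref{prop:positive}. The one point I would check is that the proposition applies to every pair, including pairs with equal popularity and the unordered/ordered counting in the sum; both are harmless, because covariance is symmetric. I would also remark on the equality case: by the strictness clause of Proposition~\ref{prop:positive}, equality forces every pair to be uncorrelated. That means at most one item can depend on $\theta$, and in the activity model it happens only when $P$ is degenerate.
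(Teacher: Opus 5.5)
Your proposal is correct and matches the paper's argument: rewrite identity~\eqref{eq:identity} as $\Var(S)-\sum_i p_i(1-p_i)=\sum_{i\neq j}\Cov(X_i,X_j)$ and apply the pairwise non-negativity of Proposition~\ref{prop:positive}, with no use of Assumption~\ref{as:volume}. Your remark on the equality case is also right, with one caveat: the conclusion that $P$ must be degenerate needs $n\geq 2$, because with a single item equality holds trivially.
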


No model satisfying Assumptions~\ref{as:independence} and~\ref{as:monotonicity} can generate a dataset that violates it. The inequality is violated when every user consumes exactly the same number of items, and it can fail when the length of every list is capped, as in datasets that keep only each user's top items (Section~\ref{sec:catalog}). The following example illustrates this.
\begin{example}[Fixed baskets]\label{ex:fixed}
Suppose every user consumes exactly $5$ of $100$ identical items, chosen at random. Then $\Var(S)=0$, and Equation~\eqref{eq:overdispersion} does not hold. In fact, each pair has negative covariance $\frac{5\cdot4}{100\cdot99}-0.05^2<0$, which is impossible according to Proposition~\ref{prop:positive}. No activity distribution can produce this, since Assumption~\ref{as:independence} is not satisfied. A fixed basket size forces items to compete for the same slots.
\end{example}

Real datasets tend to sit at the opposite extreme. In recommendation data, a few heavy users typically sit next to many light ones~\cite{clauset2009}, so the history lengths vary far more than the right-hand side of Equation~\eqref{eq:overdispersion}. This is the case for the datasets we use, with a single exception (Section~\ref{sec:experiments}).

\subsection{Activity as a Parameter and Activity as a Count}\label{sec:modelresults}
Assumption~\ref{as:volume} makes the count $S=\sum_i X_i$ sufficient for the activity $\theta$: once we know how many items a user consumed, the activity tells us nothing more about the basket. It might seem intuitive to conclude that the count can replace the parameter, and that users who consumed the same number of items consume them independently. However, this is \emph{not} the case. We can in fact easily compute the conditional distribution of a basket given its size, like so:
\begin{equation}\label{eq:conditional}
\PP(\X=\x\mid S=k)=\frac{\exp\big(\sum_{i\in[n]}h_ix_i\big)}{\displaystyle\sum_{\substack{I\subseteq[n]\\|I|=k}}\exp\big(\sum_{i\in I}h_i\big)}.
\end{equation}
Importantly, this conditional distribution is identical to the conditional distribution of independent Bernoulli variables with parameters $\logistic(h_i)$, conditioned on the sum being $k$~\cite{chen1994}. The difference between the two distributions lies in the distribution $\PP(S=k)$ of the count, which in the Bernoulli case follows a Poisson-binomial distribution~\cite{hong2013}, while in the activity model it depends on the distribution of $\theta$; for the Rasch model, this representation is due to Cressie and Holland~\cite{cressie1983}. This, together with classical results on independent Bernoulli variables conditioned on their sum~\cite{joagdev1983,efron1965}, gives the following statement:
\begin{proposition}[Competition and activity]\label{prop:competition}
Under the activity model, for every pair of items $i\neq j$:
\begin{enumerate}
\item $\Cov(X_i,X_j\mid S=k)<0$ for $1\leq k\leq n-1$;
\item the probability $\PP(X_i=1\mid S=k)$ is non-decreasing in $k$.
\end{enumerate}
\end{proposition}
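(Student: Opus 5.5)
The plan is to reduce both claims to statements about independent Bernoulli variables conditioned on their sum, using Equation~\eqref{eq:conditional}. Let $Y_1,\dots,Y_n$ be independent with $Y_i\sim\mathrm{Bernoulli}(\logistic(h_i))$ and $T=\sum_i Y_i$. Since $\PP(\X=\x\mid S=k)$ coincides with $\PP(\mathbf{Y}=\x\mid T=k)$ for every $k$ with $\PP(S=k)>0$, every conditional quantity in the statement can be computed on the $Y_i$. Under the activity model every basket has positive probability, so all $0\le k\le n$ are admissible. This step is immediate from Equation~\eqref{eq:conditional}, so the distribution $P$ of activity plays no role at all.

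For part (1), I would write $w_i=e^{h_i}>0$ and $e_k(\cdot)$ for the elementary symmetric polynomial of degree $k$. Then
\[
\PP(X_i=1\mid S=k)=\frac{w_i\,e_{k-1}(w_{-i})}{e_k(w)},\qquad
\PP(X_i=X_j=1\mid S=k)=\frac{w_iw_j\,e_{k-2}(w_{-ij})}{e_k(w)}.
\]
Strict negativity of the covariance is then equivalent to
\[
e_k(w)\,e_{k-2}(w_{-ij})<e_{k-1}(w_{-i})\,e_{k-1}(w_{-j}).
\]
Expand with $E_m:=e_m(w_{-ij})$, using $e_k(w)=E_k+(w_i+w_j)E_{k-1}+w_iw_jE_{k-2}$ and $e_{k-1}(w_{-i})=E_{k-1}+w_jE_{k-2}$. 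After cancellation of the $w_iw_jE_{k-2}^2$ and cross terms, the inequality reduces to $E_kE_{k-2}<E_{k-1}^2$. This is Newton's inequality, in its strict form for positive weights. It holds whenever $E_{k-1}>0$, which is true for $1\le k\le n-1$ because $k-1\le n-2$ and there are $n-2$ remaining weights. The boundary cases, where $E_k$ or $E_{k-2}$ vanishes ($k=1$ or $k=n-1$), give the strict inequality trivially.

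I expect the main obstacle to be this strictness and the bookkeeping of the edge cases, rather than the sign itself. Non-strict negative dependence also follows from the citation of Joag-Dev--Proschan~\cite{joagdev1983}, since conditioned independent Bernoullis are negatively associated. To get strictness I would rely on the explicit Newton computation above, where strictness holds because not all weights are equal or because the degenerate terms vanish. If needed, I would fall back on the log-concavity of $m\mapsto E_m$ with positive weights.

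For part (2), monotonicity in $k$ is Efron's result~\cite{efron1965}: for independent log-concave variables, $\EE[Y_i\mid T=k]$ is non-decreasing in $k$. Bernoulli variables are log-concave, so the result applies directly. Alternatively, the ratio above gives a direct argument. Write $r_k=w_i\,e_{k-1}(w_{-i})/e_k(w)$ and set $a_m=e_m(w_{-i})$, so that $e_k(w)=a_k+w_ia_{k-1}$. Then
\[
r_k=\frac{w_i}{w_i+a_k/a_{k-1}}.
\]
Since $(a_m)$ is log-concave by Newton's inequality, the ratio $a_k/a_{k-1}$ is non-increasing, and therefore $r_k$ is non-decreasing in $k$, with the conventions $r_0=0$ and $r_n=1$.
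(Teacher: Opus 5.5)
Your proof is correct, and its first step is the same as the paper's. Equation~\eqref{eq:conditional} identifies the law of a basket given $S=k$ with that of independent $\mathrm{Bernoulli}(\logistic(h_i))$ variables conditioned on their sum, so the activity distribution drops out.

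The paper stops there and cites the classical results: negative association of conditioned independent variables~\cite{joagdev1983} for part~1, and Efron's theorem~\cite{efron1965} for part~2. For part~2 you use the same citation. Your direct formula $r_k=w_i/(w_i+a_k/a_{k-1})$ is the same Newton-inequality argument the paper uses in the proof of Proposition~\ref{prop:drift}.

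The real difference is part~1, where you reduce the sign of the covariance by hand to $E_kE_{k-2}<E_{k-1}^2$ for the other $n-2$ weights. The citation gives more, namely negative association for all monotone functions of disjoint sets of items, but only the weak inequality $\Cov\le 0$. Your computation is self-contained and is what actually delivers the strict inequality the statement claims.

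One correction on strictness: it does not come from the weights being unequal, and log-concavity alone only gives $\le$. Identical items (all $h_i$ equal, as in Example~\ref{ex:hook}) are allowed. For $2\le k\le n-2$ neither of your stated reasons then applies, since no term vanishes and the weights are equal.

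The correct reason is that the unnormalized inequality is strict for any positive weights. Take $N=n-2$ and $1\le m\le N-1$. The normalized Newton inequality gives $E_m^2\ge\frac{(m+1)(N-m+1)}{m(N-m)}\,E_{m-1}E_{m+1}$, and this factor exceeds $1$. The boundary cases $k=1$ and $k=n-1$ are trivial, as you say.
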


\subsection{Similarity Measures under the Activity Model}\label{sec:popularity}

How much of this association each measure shows, and for which pairs, depends on the popularity of the two items and differs between measures; below we derive what the matrices of Figure~\ref{fig:ml1m} should look like if activity were the only source of association.

\subsubsection{Lift}
Under the activity model, lift is the measure that gives the most direct insight into the effect of activity on association, as the following proposition shows.

\begin{proposition}\label{prop:limits}
Under the activity model, by Equation~\eqref{eq:totalcov}, the lift of two items $i$ and $j$ is
\begin{equation}\label{eq:liftmodel}
\lift_{ij}=\frac{\EE_\theta\big[\logistic(h_i+\theta)\,\logistic(h_j+\theta)\big]}{\EE_\theta\big[\logistic(h_i+\theta)\big]\;\EE_\theta\big[\logistic(h_j+\theta)\big]}.
\end{equation}
In particular, for a fixed distribution of $\theta$ with finite $\EE[e^\theta]$, the following hold:
\begin{enumerate}
\item if $p_i,p_j\to 0$, then $\lift_{ij}\to 1+\Var(e^\theta)/\EE[e^\theta]^2$;
\item if $p_i\to 1$, then $\lift_{ij}\to1$;
\item $\lift_{ij}$ is strictly decreasing in $p_i$ and in $p_j$, unless $X_i$ and $X_j$ are independent.
\end{enumerate}
\noindent The proof is in Appendix~\ref{app:proofs}.
\end{proposition}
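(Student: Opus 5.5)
The formula \eqref{eq:liftmodel} is immediate from Proposition~\ref{prop:positive} and the model. We have $p_{ij}=\Cov(X_i,X_j)+p_ip_j$. Under the activity model $\PP(X_i=1\mid\theta)=\logistic(h_i+\theta)$, so $p_i=\EE_\theta[\logistic(h_i+\theta)]$. By conditional independence (Assumption~\ref{as:independence}), $p_{ij}=\EE_\theta[\logistic(h_i+\theta)\logistic(h_j+\theta)]$. Dividing gives the expression. The three limits are then statements about how this ratio behaves as the item parameters $h_i,h_j$ vary with $P$ fixed. Since $p_i=\EE_\theta[\logistic(h_i+\theta)]$ is continuous and strictly increasing in $h_i$, with $p_i\to0$ iff $h_i\to-\infty$ and $p_i\to1$ iff $h_i\to+\infty$ by dominated convergence, I will work in $h_i$ throughout.

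For (1), write $\logistic(h+\theta)=e^{h}e^\theta/(1+e^{h+\theta})$. The factors $e^{h_i}e^{h_j}$ cancel between numerator and denominator, leaving
\[
\frac{\EE\big[e^{2\theta}(1+e^{h_i+\theta})^{-1}(1+e^{h_j+\theta})^{-1}\big]}{\EE\big[e^\theta(1+e^{h_i+\theta})^{-1}\big]\,\EE\big[e^\theta(1+e^{h_j+\theta})^{-1}\big]}.
\]
As $h_i,h_j\to-\infty$ the denominators converge to $\EE[e^\theta]^2$ by dominated convergence, with dominator $e^\theta$. The numerator increases monotonically to $\EE[e^{2\theta}]$ by monotone convergence. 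The hypothesis that $\EE[e^\theta]$ is finite is used here; if $\EE[e^{2\theta}]=\infty$ the limit is $+\infty$, which is consistent with $1+\Var(e^\theta)/\EE[e^\theta]^2=\infty$. This gives $\EE[e^{2\theta}]/\EE[e^\theta]^2=1+\Var(e^\theta)/\EE[e^\theta]^2$. For (2), as $h_i\to+\infty$, $\logistic(h_i+\theta)\to1$ pointwise and is bounded by $1$. Both $\EE[\logistic(h_i+\theta)\logistic(h_j+\theta)]$ and $p_j\,p_i$ therefore tend to $p_j$, so the lift tends to $1$.

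For (3), which I expect to be the main obstacle, I would fix $h_j$ and differentiate $\log\lift_{ij}$ in $h_i$; monotonicity in $p_i$ then follows by the monotone change of variable. Write $f=\logistic(h_i+\theta)$ and $g=\logistic(h_j+\theta)$. Using $\partial_{h}f=f(1-f)$, the derivative is
\[
\frac{\EE[f(1-f)g]}{\EE[fg]}-\frac{\EE[f(1-f)]}{\EE[f]}
=\EE_{\mu}[1-f]-\EE_{\nu}[1-f],
\]
where $d\nu\propto f\,dP$ and $d\mu\propto fg\,dP$. The tilted measure $\mu$ is $\nu$ further reweighted by $g$, which is increasing in $\theta$. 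Since $1-f$ is decreasing in $\theta$, Chebyshev's association inequality under $\nu$ gives $\EE_\nu[g(1-f)]\le\EE_\nu[g]\,\EE_\nu[1-f]$, so the derivative is $\le0$. Equality requires $g$ or $f$ to be $\nu$-a.s. constant. Because the logistic is strictly increasing, this means $\theta$ is $P$-a.s. constant, which is exactly the case where $X_i$ and $X_j$ are independent (lift $\equiv1$). Otherwise the derivative is strictly negative, so the lift is strictly decreasing in $h_i$, hence in $p_i$, and by symmetry in $p_j$. The delicate points are justifying differentiation under the integral, which is fine because all integrands are bounded by $1$ or by $e^\theta$ times constants, and the strictness argument for degenerate $P$.
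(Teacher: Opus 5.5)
Your proposal is correct and takes essentially the same route as the paper. You work in $h_i$, use monotone/dominated convergence for the two limits, and differentiate $\log\lift_{ij}$ to get $\EE_\mu[1-f]-\EE_\nu[1-f]$; this equals the paper's $-\Cov_Q(f_i,f_j)/\EE_Q[f_j]$ with $Q=\nu$, and your explicit treatment of $\EE[e^{2\theta}]=\infty$ in part (1) is a small addition the paper leaves implicit.
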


\subsubsection{Cosine and Pearson Correlation}
Both measures are lift combined with a function of the popularities:
\begin{equation}\label{eq:fromlift}
\cos_{ij}=\lift_{ij}\sqrt{p_ip_j},\qquad
\rho_{ij}=(\lift_{ij}-1)\sqrt{\frac{p_ip_j}{(1-p_i)(1-p_j)}}.
\end{equation}
The behavior in the two limits is therefore the same as for lift, multiplied by a factor that depends on the popularities. However, monotonicity is not guaranteed, since the new factor increases with popularity while lift decreases. We will explore the behavior of these measures empirically. For covariance and Pearson correlation, Puccio et al.~\cite{puccio2019} obtained the dependence on popularity to second order in the heterogeneity, for two classes of users; Proposition~\ref{prop:limits} is exact and holds for any distribution of activity.

\subsubsection{EASE Similarity Matrix}
EASE~\cite{steck2019} learns a dense matrix of item-item weights $B$ by regressing each item on all the others, with an $\ell_2$ penalty of strength $\lambda$ and a zero diagonal. Its solution depends on the data only through the Gram matrix $G$ of the interaction matrix:
\[
P=(G+\lambda I)^{-1},\qquad B_{ij}=-\frac{P_{ij}}{P_{jj}}\ \ (i\neq j),\qquad B_{jj}=0.
\]
A model has no interaction matrix, but it has an expected Gram matrix: we replace the observed counts by the counts the model expects, computed under the fitted model (Section~\ref{sec:fitting}), and compute EASE on it. Unlike lift, cosine and Pearson correlation, a weight $B_{ij}$ depends on the co-occurrences of all items, not only on those of $i$ and $j$, so we study it empirically.
\begin{figure*}[t]
\centering
\includegraphics[width=\linewidth]{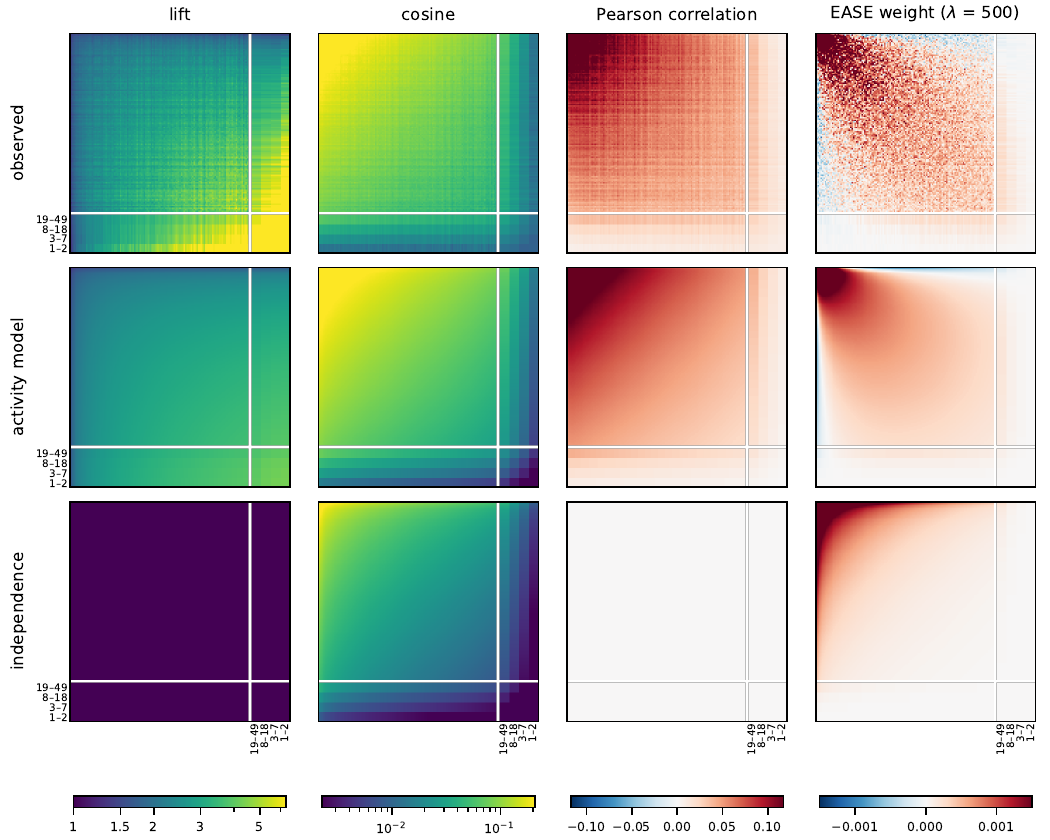}
\caption{Similarity matrices for MovieLens-1M. Rows: observed data, activity model, independence model; columns: lift, cosine, Pearson correlation, EASE weights. Films are sorted by popularity (most popular at the top left); films with fewer than 50 positive ratings are pooled into four buckets by number of users (strips at the right and bottom).}
\label{fig:ml1m}
\end{figure*}

\section{Measuring Activity-Induced Association in Data}\label{sec:fitting}
To quantify phantom correlation in real data, we fit the model by expectation-maximization~\cite{dempster1977}. By Equation~\eqref{eq:conditional}, activity enters the distribution of baskets only through $\PP(S=k)$~\cite{cressie1983}, so every distribution of $\theta$ with
\[
\PP(S=k) = \int \PP(S=k\mid \theta)\,d\PP(\theta)
\]
for every observed history length $k$ gives the same model.
We opt to approximate $\PP(\theta)$ by a discrete distribution over $G$ fixed activity levels $\theta_1<\dots<\theta_G$, that is,
\[
  \PP(\theta) := \sum_{g=1}^G \omega_g \chara(\theta=\theta_g),
\]
where $\chara$ is the indicator function and $\omega_g$ is the weight of activity level $\theta_g$. Such a discrete distribution is the nonparametric maximum-likelihood estimator of a Rasch model with random activity~\cite{lindsay1991}. Details of the initialization are given in Appendix~\ref{sec:fitdetails}.
\subsubsection{Fitting.} The activity model has two unknown parts: the vector $\mathbf{h}:=(h_1,\dots, h_n)$ of item parameters, and the weights $\boldsymbol{\omega}:=(\omega_1,\dots, \omega_G)$ of the distribution $\PP(\theta)$ of activity. 
The EM algorithm performs the following two steps iteratively until convergence.\\
\textbf{E-step:} Calculate $\mathbb{P}(\theta=\theta_g\mid S=k)$ based on the current estimates of $\mathbf{h}$ and $\boldsymbol{\omega}$. By Bayes' theorem and Equation~\eqref{eq:maxent}, summing over the baskets of size $k$,
\[
\mathbb{P}(\theta=\theta_g\mid S=k)\propto
\omega_g\exp\Big(\theta_g k-\sum_i\log\big(1+e^{h_i+\theta_g}\big)\Big),
\]
where the factor that depends on which items were consumed is the same for every $g$ and cancels in the normalization. The posterior therefore depends on a user only through the number of items they consumed.\\
\textbf{M-step:} First we update the weights $\omega_g$ using the empirical distribution $P(S=k)$ of the observed history lengths:
\[
  \omega_g\leftarrow \sum_k P(S=k)\,\mathbb{P}(\theta=\theta_g\mid S=k).
\]
Then we update the item parameters $h_i$ by solving the following equation for each item $i$:
\[
  \sum_g \omega_g \logistic(h_i+\theta_g) = p_i.
\]
In practice, instead of solving the equation exactly, we perform a single Newton step per iteration. At convergence, the model therefore reproduces every item popularity exactly.
The log-likelihood of the data that the EM algorithm maximizes is
\begin{equation}\label{eq:loglik}
{\ell(\mathbf{h},\boldsymbol{\omega})}={U}\left[\sum_i p_ih_i
+\sum_k P(S=k)\log\sum_{g=1}^G\omega_g\,\frac{e^{\theta_g k}}{Z(\theta_g)}\right],
\end{equation}
with $Z(\theta)$ as in Proposition~\ref{prop:model}. Notice that the data enter only through the item popularities $p_i$ and the distribution $P(S=k)$ of history lengths, as Equation~\eqref{eq:conditional} implies: the co-occurrences of item pairs are never used, yet the fitted model implies the co-occurrence 
\begin{equation*}
p_{ij}^{\text{act}}=\sum_g\omega_g\,\logistic(h_i+\theta_g)\,\logistic(h_j+\theta_g).
\end{equation*}
Comparing how the similarity measures behave in the data and under the fitted model shows how much of the observed association activity alone can produce.

\paragraph{How much association activity explains.} Since model and data share the popularities, every similarity measure of Section~\ref{sec:setting} is, for a given pair, a function of the co-occurrence alone. Comparing the observed co-occurrence $p_{ij}^{\text{emp}}$ with the co-occurrence $p_{ij}^{\text{model}}$ implied by a model therefore compares lift, cosine and Pearson correlation at once. We measure the distance between the two with the Poisson deviance of the co-occurrence counts over all pairs of items~\cite{nelder1972}. Given the co-occurrence of a model $p_{ij}^{\text{model}}$ and the empirical co-occurrence $p_{ij}^{\text{emp}}$, the deviance is defined as
\begin{equation}\label{eq:deviance}
D(p^{\text{model}}, p^{\text{emp}})=2U\sum_{i<j}\Big[p_{ij}^{\text{emp}}\log\frac{p_{ij}^{\text{emp}}}{p_{ij}^{\text{model}}}-\big(p_{ij}^{\text{emp}}-p_{ij}^{\text{model}}\big)\Big],
\end{equation}
which compares each difference with the natural fluctuation of a count of that size.
We use it only as a distance between counts, not as a test. Even a perfect model leaves some deviance, because counts fluctuate. The share of the systematic mismatch of independence that activity removes is
\begin{equation}\label{eq:share}
\frac
{D(p^{\text{ind}}, p^{\text{emp}})-D(p^{\text{act}}, p^{\text{emp}})}
{D(p^{\text{ind}}, p^{\text{emp}})-\EE[D(p^{\text{act}}, p^{\text{sim}})]}
\end{equation}
where $p_{ij}^{\text{ind}}=p_ip_j$ is the co-occurrence under independence, $p_{ij}^{\text{act}}$ is the co-occurrence under the fitted activity model, and $p_{ij}^{\text{sim}}$ is the co-occurrence of a dataset simulated from the fitted activity model. The share is $1$ if the data look like activity plus counting noise, and $0$ if activity explains nothing beyond independence.
We estimate this \emph{noise floor} $\EE[D(p^{\text{act}}, p^{\text{sim}})]$ as the mean deviance of the same model on ten datasets of the same size simulated from it~\cite{efron1979}; the expectation is over such datasets.

\paragraph{Each measure separately.} The deviance compares co-occurrences, and therefore all measures at once. To compare each similarity measure separately, we pool items into groups. For two groups of items $A$ and $B$, a \emph{block pair}, we use ratios of sums:
\begin{equation*}
\begin{gathered}
\lift_{AB}=\frac{\sum p_{ij}}{\sum p_ip_j},\qquad
\cos_{AB}=\frac{\sum p_{ij}}{\sum \sqrt{p_ip_j}},\\
\rho_{AB}=\frac{\sum(p_{ij}-p_ip_j)}{\sum\sqrt{p_i(1-p_i)\,p_j(1-p_j)}},
\end{gathered}
\end{equation*}
where every sum runs over the pairs $i\in A$, $j\in B$ with $i\neq j$. For groups of a single item, they reduce to the measures of Section~\ref{sec:setting}. Single pairs are too noisy for a comparison measure by measure: most of them are never consumed together, and their lift and cosine have no logarithm. Pooling averages this noise out. For a block pair $\{A,B\}$, let $y_{AB}^{\text{emp}}$, $y_{AB}^{\text{act}}$, $y_{AB}^{\text{ind}}$ and $y_{AB}^{\text{sim}}$ be the values of the measure in the data, under the two models and in a dataset simulated from the fitted activity model (on a log scale for lift and cosine). How much $y_{AB}$ fluctuates by chance differs between block pairs, so we measure each difference against $\sigma_{AB}^2=\EE\big[(y_{AB}^{\text{sim}}-y_{AB}^{\text{act}})^2\big]$: the standardized error
\begin{equation*}
Z(y^{\text{model}},y^{\text{emp}})=\frac{1}{|\mathcal P|}\sum_{\{A,B\}\in\mathcal P}\frac{\big(y_{AB}^{\text{emp}}-y_{AB}^{\text{model}}\big)^2}{\sigma_{AB}^2},
\end{equation*}
over the set $\mathcal P$ of block pairs with at least $5$ expected co-occurrences, is close to $1$ when the data look like activity plus noise. As for the deviance, the share of the error of independence removed by activity is
\begin{equation*}
\frac{Z(y^{\text{ind}},y^{\text{emp}})-Z(y^{\text{act}},y^{\text{emp}})}{Z(y^{\text{ind}},y^{\text{emp}})-\EE\big[Z(y^{\text{act}},y^{\text{sim}})\big]}.
\end{equation*}
Both expectations are estimated on the ten simulated datasets of the noise floor; for $\EE[Z(y^{\text{act}},y^{\text{sim}})]$, each dataset is compared using $\sigma_{AB}^2$ from the other nine.

\paragraph{Popularity surface.} The share does not show \emph{which} part of the popularity structure of a measure activity reproduces. Since $h_i$ is fixed by $p_i$, the activity model's value of every measure for a pair is a function of the two popularities alone: the model has no other information about the items. We therefore describe data and model with the same function of popularity. With $x=\log p_A-m$ and $y=\log p_B-m$, where $p_A$ is the mean popularity of block $A$ and $m$ the median of $\log p_A$, we fit by least squares, over the same block pairs,
\begin{equation}\label{eq:surface}
\log y_{AB}=a+b\,(x+y)+c\,xy+d\,(x+y)^2,
\end{equation}
the general symmetric polynomial of degree two. The level $a$ is the value of a pair of median popularity; the slope $b$ is the elasticity in popularity (if $p_Ap_B$ grows by $1\%$, the measure changes by about $b\%$); since $xy=\frac14[(x+y)^2-(x-y)^2]$, a positive $c$ means that a popular--rare pair has a lower value than a balanced pair with the same $p_Ap_B$; $d$ is the curvature. Fitted to the observed values, Equation~\eqref{eq:surface} describes what popularity alone can express; the difference between observed and model coefficients is the popularity structure activity does not produce. Under independence, $a=b=c=d=0$ for lift, and $b=1/2$, $c=d=0$ for cosine.

\section{Experiments}\label{sec:experiments}
\subsection{Datasets}
Table~\ref{tab:datasets} lists the four datasets we considered. In all of them an interaction is binary, and repeated interactions count once. For MovieLens-1M and MovieLens-20M~\cite{harper2015} we count a rating as a consumption only if it is at least $4$, the usual way of turning ratings into implicit feedback~\cite{liang2018,steck2019}. Last.fm (HetRec 2011)~\cite{cantador2011} lists, for each user, their top 50 artists. Amazon Books (Amazon Reviews 2023, 5-core)~\cite{hou2026} keeps users and books with at least 5 reviews. These construction choices shape the distribution of history lengths, and therefore the amount of phantom correlation a dataset can contain.

\begin{table}[t]
\centering
\footnotesize
\setlength{\tabcolsep}{1.8pt}
\begin{tabular}{lrrrrrr}
\toprule
Dataset & Users & Items & Interact. & $\Var(S)$ & $\sum\limits_i\Var(X_i)$ & $\sum\limits_{i\neq j}\Cov(X_i,X_j)$ \\
\midrule
MovieLens-1M & 6{,}038 & 3{,}533 & 575{,}281 & 11{,}024 & 85 & 10{,}939 \\
MovieLens-20M & 138{,}287 & 20{,}720 & 9{,}995{,}410 & 10{,}453 & 67 & 10{,}386 \\
Last.fm & 1{,}892 & 17{,}632 & 92{,}834 & 34 & 47 & $-$12.7 \\
Amazon Books & 776{,}370 & 495{,}063 & 9{,}488{,}297 & 568 & 12 & 555 \\
\bottomrule
\end{tabular}

\caption{Datasets used and their relation with Equation~\eqref{eq:identity}.}
\label{tab:datasets}
\end{table}

On every dataset except Last.fm, users differ strongly in how many items they consume, and by Equation~\eqref{eq:identity} this fixes a large, positive total covariance between items, which the activity model distributes among the pairs. On Last.fm the total covariance is negative, which no model satisfying Assumptions~\ref{as:independence} and~\ref{as:monotonicity} can produce (Corollary~\ref{cor:possible}). Last.fm thus shows what happens when the assumptions fail.
The models are always fitted on all items. The groups of Section~\ref{sec:fitting} are as follows. On MovieLens-1M, the films consumed by at least $\fmMinUsers$ users are compared one by one {(Section~\ref{sec:share})}, and in blocks of $\fmStatBlock$ films consecutive in popularity for the block statistics; the remaining films are pooled into four buckets, equally spaced in $\log p_i$. On the larger catalogs, we sort all items by popularity and cut them into consecutive \emph{popularity groups} that each hold about the same share of all interactions, so that every group carries a comparable amount of data; items with the same popularity always share a group, so a dataset has at most $\catalogGroups$ groups.
The deviance in Equation~\eqref{eq:deviance} is computed on all pairs of items, without pooling; the comparison of each measure separately uses these blocks and groups (Table~\ref{tab:results}).

\paragraph{Use of AI tools.} The analysis code (model fitting, simulations, statistics, tables and figures) was written and run with the assistance of an AI coding assistant (Claude, Anthropic) under the authors' direction. The authors checked it and take responsibility for the correctness of the results.

\begin{table}[t]
\centering
\footnotesize
\setlength{\tabcolsep}{3pt}
\begin{tabular}{lrrrrr}
\toprule
& Pairs & \multicolumn{4}{c}{Blocks} \\
\cmidrule(lr){2-2}\cmidrule(lr){3-6}
Dataset & all & lift & cosine & Pearson & EASE \\
\midrule
MovieLens-1M & 0.81 & $0.99$ & $0.99$ & $0.99$ & $0.79$ \\
MovieLens-20M & 0.73 & $0.98$ & $0.98$ & $0.96$ & $0.75$ \\
Last.fm & 0.00 & $-0.02$ & $0.03$ & $0.00$ & $0.00$ \\
Amazon Books & 0.50 & $0.84$ & $0.84$ & $0.73$ & -- \\
\bottomrule
\end{tabular}

\caption{Share of the systematic error of the independence model removed by the activity model. Deviance over all pairs of items (Equation~\eqref{eq:share}), and standardized error of each measure over the blocks of items used in the figures.}
\label{tab:results}
\end{table}

\begin{table}[t]
\centering
\footnotesize
\setlength{\tabcolsep}{2.5pt}
\begin{tabular}{lrrrrrrrr}
\toprule
& \multicolumn{2}{c}{$a$ (level)} & \multicolumn{2}{c}{$b$ (popularity)} & \multicolumn{2}{c}{$c$ (imbalance)} & \multicolumn{2}{c}{$d$ (curvature)} \\
\cmidrule(lr){2-3}\cmidrule(lr){4-5}\cmidrule(lr){6-7}\cmidrule(lr){8-9}
& obs. & act. & obs. & act. & obs. & act. & obs. & act. \\
\midrule
\multicolumn{9}{l}{\emph{MovieLens-1M}} \\
Lift & $1.09$ & $1.05$ & $-0.18$ & $-0.12$ & $0.11$ & $0.06$ & $-0.02$ & $-0.02$ \\
Cosine & $-2.45$ & $-2.49$ & $0.32$ & $0.38$ & $0.11$ & $0.06$ & $-0.02$ & $-0.02$ \\
Pearson & $-2.83$ & $-2.90$ & $0.25$ & $0.33$ & $0.16$ & $0.08$ & $-0.03$ & $-0.03$ \\
\midrule
\multicolumn{9}{l}{\emph{MovieLens-20M}} \\
Lift & $1.01$ & $1.09$ & $-0.17$ & $-0.16$ & $0.09$ & $0.07$ & $-0.02$ & $-0.02$ \\
Cosine & $-2.22$ & $-2.13$ & $0.33$ & $0.34$ & $0.08$ & $0.05$ & $-0.01$ & $-0.02$ \\
Pearson & $-2.64$ & $-2.50$ & $0.25$ & $0.28$ & $0.11$ & $0.08$ & $-0.02$ & $-0.03$ \\
\midrule
\multicolumn{9}{l}{\emph{Last.fm}} \\
Lift & $0.07$ & $0.01$ & $0.03$ & $0.00$ & $0.11$ & $0.00$ & $-0.01$ & $0.00$ \\
Cosine & $-3.64$ & $-3.70$ & $0.54$ & $0.50$ & $0.11$ & $0.00$ & $-0.01$ & $0.00$ \\
Pearson & -- & $-8.00$ & -- & $0.52$ & -- & $-0.01$ & -- & $0.01$ \\
\midrule
\multicolumn{9}{l}{\emph{Amazon Books}} \\
Lift & $1.73$ & $1.58$ & $-0.11$ & $0.00$ & $0.24$ & $0.01$ & $-0.05$ & $0.00$ \\
Cosine & $-8.22$ & $-8.37$ & $0.39$ & $0.50$ & $0.24$ & $0.01$ & $-0.05$ & $0.00$ \\
Pearson & $-8.32$ & $-8.60$ & $0.37$ & $0.50$ & $0.46$ & $0.01$ & $-0.11$ & $0.00$ \\
\bottomrule
\end{tabular}

\caption{Popularity surface (Equation~\eqref{eq:surface}) of each measure over block pairs, fitted to the observed values (obs.) and to the activity model (act.). Block pairs with negative Pearson correlation are left out (Amazon Books: \amzNonposPearson{} of \amzBlockPairs); on Last.fm they are \lfmNonposPearson{} of \lfmBlockPairs{}, and the observed fit is not shown.}
\label{tab:trend}
\end{table}

\subsection{How Much Association Activity Explains}\label{sec:share}\label{sec:catalog}
Figure~\ref{fig:ml1m} shows the four measures on MovieLens-1M for every pair of the $\fmItems$ films with at least $\fmMinUsers$ positive ratings ($\fmPairs$ pairs, $\fmDenseShare$ of all interactions; only $\fmNeverPct$ of them were never rated together), with the remaining $\fmTailItems$ films pooled into the four buckets; the figures for the other datasets are in the code repository\footnote{\url{https://github.com/Mfavier91/Phantom-Correlation}}. Table~\ref{tab:results} summarizes all datasets. Where users differ strongly in activity, activity explains most of the association: on MovieLens-1M and MovieLens-20M it removes most of the systematic mismatch of independence over all pairs, and almost all of it over blocks, for every measure. On MovieLens-1M, the observed co-occurrence of $\fmActWithinTwo$ of the pairs is within a factor of two of what activity implies, against $\fmIndWithinTwo$ under independence. On Amazon Books, activity removes about half of the mismatch over all pairs, most of which are too rare to tell the models apart, and most of it over blocks, least for Pearson correlation. On Last.fm, where every list is capped at 50 artists and history lengths vary less than under independence (Table~\ref{tab:datasets}), activity removes nothing: the fitted model is essentially the independence model, as Corollary~\ref{cor:possible} anticipates. EASE, trained on the co-occurrences each model expects, follows the same ranking: activity removes most of the error of independence on both MovieLens datasets, although less than for the other measures, since single EASE weights are dominated by sampling noise, and none on Last.fm; on Amazon Books, the item--item matrix is too large to invert.

\subsection{How Association Depends on Popularity}\label{sec:where}
The activity model reproduces the overall shape of the matrices of Figure~\ref{fig:ml1m}: lift is largest among rare films and cosine among popular ones, in the data and in the model, as Proposition~\ref{prop:limits} and Equation~\eqref{eq:fromlift} predict, and Pearson correlation is largest among popular films, while independence has no correlation at all. The popularity surfaces of Table~\ref{tab:trend} make this precise for every dataset. The observed values are well described by a function of popularity alone ($R^2$ between $\trendRtwoMin$ and $\trendRtwoMax$), and activity gets their level $a$ right everywhere. On the two MovieLens datasets, activity also reproduces the shape: lift falls with popularity and is lower for unbalanced pairs, in the data and in the model. The observed lift falls somewhat faster, and depends more on the imbalance, than activity predicts, most visibly on MovieLens-1M, because rare films are associated more than activity allows; cosine and Pearson correlation inherit this. On Amazon Books, the activity model's lift is almost constant: books this rare sit in the first limit of Proposition~\ref{prop:limits}, where activity adds the same association to every pair, so its cosine grows with popularity as under independence ($b\approx1/2$). The observed lift instead falls with popularity as steeply as on MovieLens, and depends on the imbalance more than on any other dataset: books of similar popularity are read together, and heavy readers drift to niche books, more than activity allows (Appendix~\ref{app:datasets}). On Last.fm, activity produces no popularity structure at all, while the observed lift grows with popularity. In every dataset, what activity does not explain is itself structured by popularity.

\subsection{What Activity Does Not Explain}\label{sec:notexplained}
Activity sets the level of association of a typical pair and, on MovieLens, most of its dependence on popularity; it does not say which pairs stand out. On MovieLens-1M, the Spearman correlation between the observed and the model lift over all pairs is only $\fmActSpearman$, and only the data contain negatively correlated pairs: the ranking of pairs within the activity baseline is where preferences live.

\section{Discussion}\label{sec:discussion}

The picture that emerges is that of a \emph{baseline}. User activity sets, for every pair of items, a level of association that depends on the popularity of the two items, and genuine preferences appear as deviations from it. On MovieLens-1M, the baseline is most of what we observe for a typical pair; the ranking of pairs within that baseline is where taste shows. Where the activity model falls short, what remains is visibly structured by popularity and genuine preferences.

Before continuing, we want to be clear about what we are \emph{not} claiming. We are not claiming that collaborative filtering signals are fake: activity explains the level of association, not which pairs stand out. We are not proposing a statistical test: the activity model is not a hypothesis to reject, but a mechanism whose consequences we want to understand. And we are not claiming that activity is the only source of popularity bias.

What we \emph{are} claiming is that every similarity measure computed on implicit feedback mixes two things, and that one of them, the activity baseline, has a precise mathematical form.
This matters for the choice of a similarity measure. Cosine similarity, the default in item-based collaborative filtering~\cite{sarwar2001,linden2003}, inherits a baseline that grows with popularity: whenever recommendation scores aggregate similarities, popular items receive a head start that has nothing to do with taste~\cite{jannach2015,klimashevskaia2024}. Lift and PMI inherit a baseline that is largest for rare items, which is consistent with their known tendency to favor rare items~\cite{levy2015}. Whether and how much this baseline biases the final recommendations is the natural next question, and the subject of ongoing work.

Where activity alone is not enough, as on Amazon Books and Last.fm, users differ not only in how much they consume, but also in how mainstream they are~\cite{bauer2019}; modeling that second dimension is beyond the scope of this paper.

\section{Conclusion}\label{sec:conclusion}
We started with a streaming service where nobody likes anything together, and every pair of films looked similar anyway. The reason was simple: some users watch more than others. We showed that this simple reason has a precise form. Users who differ only in activity must differ along one axis, along which the odds of every item grow by the same factor; this makes every pair of items positively associated, fixes the total amount of item-item covariance, and shapes it according to item popularity. On real data, this activity baseline accounts for the typical association between films, while genuine taste lives in the deviations from it. Every time we read item-item similarity as a sign of shared taste, we should first ask how much of it is just activity.

\appendix
\section{Proofs and Further Results}\label{app:proofs}
\paragraph{Proof of Proposition~\ref{prop:positive}.} Write $f(\theta)=\PP(X_i=1\mid\theta)$ and $g(\theta)=\PP(X_j=1\mid\theta)$. By the law of total covariance, $\Cov(X_i,X_j)=\EE_\theta[\Cov(X_i,X_j\mid\theta)]+\Cov_\theta(f(\theta),g(\theta))$, and the first term vanishes because the items are independent given $\theta$ (Assumption~\ref{as:independence}); this gives the equality. Let $\theta'$ be an independent copy of $\theta$. Expanding the product shows that
\begin{equation*}
\Cov_\theta(f(\theta),g(\theta))=\tfrac12\,\EE\big[(f(\theta)-f(\theta'))(g(\theta)-g(\theta'))\big].
\end{equation*}
By Assumption~\ref{as:monotonicity}, the functions $f$ and $g$ are non-decreasing, so the two factors never have opposite signs, and the expectation is non-negative.

If $\theta\indep X_i$, the function $f$ is almost surely constant and the expectation is zero; the same holds if $\theta\indep X_j$. Conversely, suppose that neither $f(\theta)$ nor $g(\theta)$ is almost surely constant. Since $f$ is non-decreasing, there is a value $c_f$ such that the sets $L_f=\{f<c_f\}$ and $U_f=\{f\geq c_f\}$ both have positive probability, and every point of $L_f$ lies below every point of $U_f$; define $c_g$, $L_g$ and $U_g$ in the same way. The sets $L=L_f\cap L_g$ and $U=U_f\cap U_g$ still have positive probability, because $L_f$ and $L_g$ are both lower segments of the range of $\theta$, and $U_f$ and $U_g$ both upper segments. Whenever $\theta\in L$ and $\theta'\in U$, both factors are strictly positive, which happens with probability $\PP(\theta\in L)\,\PP(\theta\in U)>0$; hence the expectation is strictly positive. \qed

\paragraph{Proof of Proposition~\ref{prop:limits}.} The popularity $p_i=\EE_\theta[\logistic(h_i+\theta)]$ is continuous and strictly increasing in $h_i$, with limits $0$ and $1$ as $h_i\to-\infty$ and $h_i\to+\infty$, so we can prove the three claims in terms of $h_i$. Write $\gamma=e^\theta$ and $f_i(\theta)=\logistic(h_i+\theta)$. (1) The function $e^{-h}\logistic(h+\theta)=\gamma/(1+e^{h}\gamma)$ increases to $\gamma$ as $h\to-\infty$. By monotone convergence, numerator and denominator of Equation~\eqref{eq:liftmodel}, multiplied by $e^{-h_i-h_j}$, converge to $\EE[\gamma^2]$ and $\EE[\gamma]^2$, and $\EE[\gamma^2]/\EE[\gamma]^2=1+\Var(\gamma)/\EE[\gamma]^2$. (2) $f_i(\theta)\to1$ for every $\theta$; by bounded convergence, numerator and denominator both converge to $\EE[f_j(\theta)]$. (3) Since $\partial f_i/\partial h_i=f_i(1-f_i)$ and these derivatives are bounded by $\tfrac14$, we can differentiate Equation~\eqref{eq:liftmodel} under the expectation:
\begin{equation*}
\frac{\partial\log\lift_{ij}}{\partial h_i}=\frac{\EE[f_i(1-f_i)f_j]}{\EE[f_if_j]}-\frac{\EE[f_i(1-f_i)]}{\EE[f_i]}.
\end{equation*}
Let $Q$ be the distribution of $\theta$ among users who consumed item $i$, that is, $P$ reweighted by $f_i$. Dividing numerator and denominator of both fractions by $\EE[f_i]$, the derivative becomes
\begin{equation*}
\frac{\partial\log\lift_{ij}}{\partial h_i}=-\frac{\Cov_Q(f_i,f_j)}{\EE_Q[f_j]}.
\end{equation*}
Since $f_i$ and $f_j$ are increasing, the covariance is non-negative by the argument in the proof of Proposition~\ref{prop:positive}, and positive unless $\theta$ is almost surely constant, which is exactly the case in which $X_i$ and $X_j$ are independent (Proposition~\ref{prop:positive}). The same holds for $h_j$ by symmetry. In words, the derivative is $\EE[1-f_i(\theta)\mid X_i=1,X_j=1]-\EE[1-f_i(\theta)\mid X_i=1]$: learning that a user also consumed $j$ makes them look more active, and for a more active user item $i$ is closer to saturation. \qed

\paragraph{Drift towards the tail.} Proposition~\ref{prop:competition} says that a user who consumes more items is more likely to consume every item. It does not say that all items gain equally.
\begin{proposition}[Drift towards the tail]\label{prop:drift}
Under the activity model, let item $i$ be less popular than item $j$, so that $h_i<h_j$. Then the ratio
\begin{equation*}
\frac{\PP(X_i=1\mid S=k)}{\PP(X_j=1\mid S=k)}
\end{equation*}
is non-decreasing in $k$, from $e^{h_i-h_j}$ for $k=1$ to $1$ for $k=n$. Consequently, the expected mean popularity of the items in a basket, namely $\EE\big[\tfrac1k\sum_{i\in[n]}p_iX_i\mid S=k\big]$, is non-increasing in $k$.
\end{proposition}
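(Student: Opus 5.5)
The plan is to work directly from the conditional distribution in Equation~\eqref{eq:conditional}, which expresses everything through elementary symmetric polynomials of the weights $w_l:=e^{h_l}$. Write $e_m(\cdot)$ for the $m$-th elementary symmetric polynomial, with $e_0=1$ and $e_m=0$ for $m<0$ or for $m$ larger than the number of variables. Summing Equation~\eqref{eq:conditional} over the baskets of size $k$ that contain item $i$ gives
\[
\PP(X_i=1\mid S=k)=\frac{w_i\,e_{k-1}(w^{(i)})}{e_k(w)},
\]
where $w^{(i)}$ denotes the weights with item $i$ removed. In the ratio for two items $i\neq j$ the denominator cancels. Let $E_m:=e_m$ of the weights with \emph{both} $i$ and $j$ removed. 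Then $e_{k-1}(w^{(i)})=w_jE_{k-2}+E_{k-1}$ and $e_{k-1}(w^{(j)})=w_iE_{k-2}+E_{k-1}$. Setting $r_k:=E_{k-2}/E_{k-1}\in[0,\infty]$, this gives
\[
\frac{\PP(X_i=1\mid S=k)}{\PP(X_j=1\mid S=k)}=\frac{w_i}{w_j}\cdot\frac{w_j r_k+1}{w_i r_k+1}.
\]

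The map $t\mapsto (w_jt+1)/(w_it+1)$ is increasing on $[0,\infty]$ because $w_j>w_i$. It is therefore enough to show that $r_k$ is non-decreasing in $k$. That is the inequality $E_{k-1}^2\geq E_{k-2}E_k$, which is Newton's inequality (log-concavity of the elementary symmetric polynomials of positive numbers). This is the only genuinely non-elementary step. I would cite it, and if needed prove it via the real-rootedness of $\prod_l(1+w_lt)$; it is also the classical fact behind~\cite{joagdev1983,efron1965} used for Proposition~\ref{prop:competition}.

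The endpoints follow from the conventions. At $k=1$ we have $E_{-1}=0$, so $r_1=0$ and the ratio is $w_i/w_j=e^{h_i-h_j}$. At $k=n$ we have $E_{n-1}=0$, since only $n-2$ variables remain, so $r_n=\infty$ and the ratio is $1$; directly, both probabilities equal $1$. Some care is needed where $E_{k-1}=0$, but for $1\le k\le n$ this happens only at $k=n$.

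For the consequence, let $\pi_l(k):=\PP(X_l=1\mid S=k)$. Since $\sum_l\pi_l(k)=k$, the quantity $q_l(k):=\pi_l(k)/k$ is a probability distribution on items, and
\[
\EE\Big[\tfrac1k\sum_l p_lX_l\,\Big|\,S=k\Big]=\sum_l p_l\,q_l(k).
\]
The first part says that $q_i(k)/q_j(k)$ is non-decreasing in $k$ whenever $h_i<h_j$. Equivalently, $q_l(k+1)/q_l(k)$ is non-increasing in $h_l$: items with equal $h$ give equal $q$, and the positivity of $q$ for $k\le n$ makes the division legitimate. Thus $q(k+1)$ is smaller than $q(k)$ in the likelihood-ratio order on items sorted by $h$. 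The likelihood-ratio order implies the usual stochastic order, which I would verify by the standard single-crossing argument on tail sums. Finally, $p_l=\EE_\theta[\logistic(h_l+\theta)]$ is non-decreasing in $h_l$, so the expectation of $p_l$ under $q(k+1)$ is at most its expectation under $q(k)$, which is the claim.
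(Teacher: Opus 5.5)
Your proof is correct and takes essentially the same route as the paper. Both arguments write the two conditional probabilities through the elementary symmetric polynomials $E_m$ of the other $n-2$ weights, reduce the ratio to an increasing function of $E_{k-2}/E_{k-1}$, get monotonicity from Newton's inequalities, and then pass from the likelihood-ratio order of the shares $\PP(X_l=1\mid S=k)/k$ to first-order stochastic dominance; your handling of the endpoint conventions ($r_1=0$, $r_n=\infty$) and of ties in $h$ is somewhat more explicit than the paper's.
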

\begin{proof}
Write $a=e^{h_i}$, $b=e^{h_j}$, and let $E_m$ be the $m$-th elementary symmetric polynomial of the values $e^{h_l}$ of the other $n-2$ items $l\neq i,j$ ($E_m=0$ for $m<0$ and $m>n-2$). By Equation~\eqref{eq:conditional}, the baskets of size $k$ that contain $i$ have total weight $a(E_{k-1}+bE_{k-2})$, and those that contain $j$ have total weight $b(E_{k-1}+aE_{k-2})$, with the same normalizing constant. Hence, for $1\leq k\leq n-1$, the ratio is
\begin{equation*}
\frac{a\,(E_{k-1}+bE_{k-2})}{b\,(E_{k-1}+aE_{k-2})}=\frac ab\cdot\frac{1+b\,t_k}{1+a\,t_k},\qquad t_k=\frac{E_{k-2}}{E_{k-1}}.
\end{equation*}
Since $b>a$, this expression increases with $t_k$, and $t_k$ is non-decreasing in $k$ because the elementary symmetric polynomials of positive numbers satisfy Newton's inequalities $E_{k-1}^2\geq E_kE_{k-2}$~\cite{hardy1952}. For $k=1$ we have $t_1=0$, and for $k=n$ both probabilities equal $1$. For the consequence, the shares $s_i(k)=\PP(X_i=1\mid S=k)/k$ sum to $1$ over the items, since a basket of size $k$ contains $k$ items. By the first part, the ratio $s_i(k)/s_j(k)$ is non-decreasing in $k$ whenever $i$ is less popular than $j$: as $k$ grows, the shares move towards less popular items in the likelihood ratio order, which implies first-order stochastic dominance~\cite{shaked2007}. Since the popularity $p_i$ is non-decreasing in $h_i$, its expectation under $s(k)$, which is the expected mean popularity of the basket, is non-increasing in $k$.
\end{proof}
Heavy users consume a less mainstream mix of items, even if nobody has a taste for the tail. The reason is that popular items saturate: a heavy user has already consumed most of them, so additional activity can only go to rarer ones. A drift towards the tail with activity is therefore not, by itself, evidence of a taste for niche items; only a drift faster than the one in Proposition~\ref{prop:drift} is. Amazon Books shows such a faster drift (Appendix~\ref{app:datasets}).

\section{Initialization of the Fit}\label{sec:fitdetails}
Expectation-maximization needs an initial estimate $h_i^{(0)}$ of the item parameters and an initial distribution $\PP(\theta)$ of activity across users. The item parameters are initialized as if items were independent, while the distribution of activity is a discrete distribution on $G=200$ fixed activity levels $\theta_1<\dots<\theta_G$, initialized with equal weights:
\[
  h_i^{(0)}:=\logit(p_i),
  \qquad \omega^{(0)}_g= 1/G.
\]
The extreme activity levels $\theta_1$ and $\theta_G$ are chosen by solving the equations
\begin{equation*}
  \begin{aligned}
   \sum_i\logistic(h_i^{(0)}+\theta_1 + 1.5)&=\min_u S_u,\\
    \sum_i\logistic(h_i^{(0)}+\theta_G - 1.5)&=\max_u S_u.
  \end{aligned}
\end{equation*}
In other words, we find the activity levels that correspond to the shortest and longest observed histories, and widen the range by $1.5$ on the log-odds scale for good measure. The remaining levels are equally spaced between $\theta_1$ and $\theta_G$, and stay fixed during the fit. We check afterwards that the fitted distribution puts no weight on $\theta_1$ and $\theta_G$, which would signal an insufficient range. We stop when the log-likelihood in Equation~\eqref{eq:loglik} changes by less than a relative $10^{-10}$ and every item popularity is reproduced to within a relative $10^{-6}$.

\section{Other Datasets}\label{app:datasets}
The figures for these datasets are in the code repository.

\paragraph{MovieLens-1M: the rarest films.} Among the rarest films ($\fmRareBucketUsers$ positive ratings, $\fmRareBucketItems$ films, $\fmRareBucketShare$ of all interactions), the observed association is $\fmRareWithinAct$ times what activity alone implies: rare films are consumed together far more than activity explains. This excess, however, rests on very few users: only $\fmRareContribUsers$ users rated two of these films, and their contributions are so uneven that they amount to about $\fmRareEffUsers$ users of equal weight.
\paragraph{MovieLens-20M.} On MovieLens-20M, activity removes a share $\mltwfDevShare$ of the systematic mismatch of the independence model, over all pairs of films (the figures in the repository use $\mltwfGroups$ popularity groups). Relative to the activity model, the observed association is close to the model everywhere except the least popular groups, where it is higher than activity allows; the last group pools all films with at most $\mltwfLastGroupMaxUsers$ positive ratings. Heavy users also rate less popular films than the activity model predicts (mean absolute error $\mltwfProfAct$ in $\log_{10}$ popularity; see the Amazon paragraph below): the part that activity does not explain is structured by popularity.

\paragraph{Amazon Books.} Amazon Books shows the clearest popularity structure beyond activity. Heavy readers review much less popular books than light readers, and the activity model cannot reproduce this drift: under it, the popularity of a user's books barely changes with the number of books they review (mean absolute error $\amzProfAct$ in $\log_{10}$ popularity). Its curve coincides with that of the independence model, and this is no accident: by Equation~\eqref{eq:conditional}, given its size, a basket has the same form under both models, with odds that are almost proportional for books this rare, so activity decides how many books a user reviews, not which ones. The drift that Proposition~\ref{prop:drift} allows is real but small. The consequence shows in the association. Within popularity groups, the observed lift is U-shaped, high for rare \emph{and} for popular books, while the activity model gives an almost flat curve. Over the whole catalog, books of similar popularity are consumed together more than activity allows, and books of very different popularity less: the observed lift is highest along the diagonal of the block matrix, while the activity model's is almost uniform. Over all pairs of books, activity removes only a share $\amzDevShare$ of the systematic mismatch of independence. The correlation row tells the same story: the activity model spreads correlation along the rows of the most popular books, while in the data it stays among popular books. Activity gets the overall amount of association right, but not where it sits: readers differ not only in how much they read, but also in how mainstream their choices are, which the activity model, by construction, cannot express.
\paragraph{Last.fm.} Last.fm is the boundary case. Every list is capped at 50 artists, the degrees are under-dispersed (variance ratio $\lfmVarRatio$), and the condition in Equation~\eqref{eq:overdispersion} fails: there is almost no room for phantom correlation, and activity removes none of the systematic mismatch of the independence model ($\lfmDevShare$). The observed association is again structured by popularity: correlation is positive among the most popular artists and slightly negative between popular and less popular ones, a pattern that no single activity dimension can produce, since it only allows non-negative association.

\end{document}